\let\epsilon\varepsilon
\newcommand*{\thmdep}[2]{}
\newcommand*{\Th}{^{\textrm{th}}}
\newcommand*{\wLoG}{without loss of generality}
\let\eps\epsilon
\newcommand*{\lavec}[1]{\mathbf{#1}}  %
\newcommand*{\defeq}{:=}
\newcommand*{\ceil}[1]{\left\lceil #1 \right\rceil}
\DeclareMathOperator*{\E}{E}
\DeclareMathOperator*{\argmax}{argmax}
\DeclareMathOperator*{\opt}{opt}
\DeclareMathOperator{\Sum}{sum}
\DeclareMathOperator{\LP}{LP}
\DeclareMathOperator{\IP}{IP}
\g@addto@macro{\UrlBreaks}{%
\do\/%
\do\a\do\b\do\c\do\d\do\e\do\f\do\g\do\h\do\i\do\j\do\k\do\l\do\m%
\do\n\do\o\do\p\do\q\do\r\do\s\do\t\do\u\do\v\do\w\do\x\do\y\do\z%
\do\A\do\B\do\C\do\D\do\E\do\F\do\G\do\H\do\I\do\J\do\K\do\L\do\M%
\do\N\do\O\do\P\do\Q\do\R\do\S\do\T\do\U\do\V\do\W\do\X\do\Y\do\Z%
\do\0\do\1\do\2\do\3\do\4\do\5\do\6\do\7\do\8\do\9%
}
\renewcommand*{\defeq}{\coloneqq}
\newcolumntype{L}{>{$\displaystyle}l<{$}}
\algnewcommand{\LineComment}[1]{\State \textcolor{gray}{// #1}}
\newcommand*{\acknowledgements}[1]{\paragraph{Acknowledgements.} #1}
\newtheorem{theorem}{Theorem}
\newtheorem{definition}{Definition}
\newtheorem{corollary}{Corollary}[theorem]
\newtheorem{lemma}[theorem]{Lemma}
\newtheorem{observation}[theorem]{Observation}
\crefname{claim}{Claim}{Claims}
\crefname{property}{Property}{Properties}
\crefname{observation}{Observation}{Observations}
\crefname{transformation}{Transformation}{Transformations}
\title{Analysis of the Harmonic Function Used in Bin-Packing}
\author{Eklavya Sharma\\
Department of Computer Science and Automation\\
Indian Institute of Science, Bengaluru.\\
\texttt{eklavyas@iisc.ac.in}}
\date{\empty}
\DeclareMathOperator{\score}{score}
\DeclareMathOperator{\cost}{cost}
\newcommand*{\zhat}{\widehat{z}}
\newcommand*{\hdhk}{\operatorname{\mathtt{HDH}}_k}
\begin{document}

\maketitle

\begin{abstract}
The harmonic function was first introduced by Lee and Lee \cite{leelee}
for analyzing their online bin-packing algorithm.
Subsequently, it has been used to obtain approximation algorithms
for many different packing problems.
Here we slightly generalize the harmonic function
and give alternative proofs of its important properties.

\end{abstract}

\setlength{\parskip}{0.6em}
\setlength{\parindent}{0pt}

\acknowledgements{I am grateful to my advisor, Prof.~Arindam Khan, for his valuable comments.}

\section{Introduction}

Bin-packing is one of the most important problems in operations research
and combinatorial optimization.
Several approximation algorithms have been devised for this problem,
in both the online and offline setting \cite{coffman2013bin}.
Geometric bin-packing, a natural variant of the bin-packing problem,
has also received significant attention,
and many approximation algorithms exist for it \cite{CKPT17}.

Bin-packing is NP-hard, so we seek approximation algorithms.
The worst-case approximation ratio usually occurs only for small pathological instances.
Thus, the standard performance measure is the asymptotic approximation ratio (AAR).

Lee and Lee gave a simple approximation algorithm for online bin-packing \cite{leelee}.
Their algorithm is parametrized by an integer $k \ge 2$ and uses at most $T_k\opt(I) + k$ bins
to pack items $I$ ($\opt(I)$ is the minimum number of bins needed to pack $I$),
where $T_k$ is a function of $k$ and $T_{\infty} \defeq \lim_{k \to \infty} T_k \approx 1.69103$.
The analysis of their algorithm uses a novel technique called \emph{harmonic function}.
They also prove that their algorithm's AAR is optimal for $O(1)$-space online algorithms.

Subsequently, many harmonic-based algorithms were given for online bin-packing
\cite{ramanan1989line,seiden2002online,BaloghBDEL18},
geometric bin-packing \cite{caprara2008,EpsteinS05,CsirikV93,han2011new,rna}
and geometric strip-packing \cite{caprara2008,BansalHISZ13}.
Caprara's $\hdhk$ algorithm \cite{caprara2008} for $d$-dimensional geometric bin-packing ($d$D BP)
has an AAR of $T_k^{d-1}$ (where $k$ is a parameter to the algorithm),
which is the best-known AAR for $d \ge 3$.

The crucial ingredient in Lee and Lee's analysis \cite{leelee}
is a closed-form expression for the max-knapsack-profit of the harmonic function
(we formally define max-knapsack-profit of a function later).
This result forms the basis of many harmonic-based algorithms,
including Lee and Lee's algorithm.
Caprara's algorithm for $d$D BP also relies on this result,
but they use a slightly different variant of the harmonic function.
They state a closed-form expression for the max-knapsack-profit of their variant,
but they don't prove it (probably because the proof is very similar to that of Lee and Lee).
We fill this gap by giving a generalized definition of the harmonic function
that captures both of these variants and obtaining a closed-form expression
for the max-knapsack-profit.

Our proof is also different from that of Lee and Lee:
we use integer linear programs in our proof, which makes the proof simpler.
Our proof is also more detailed.

\subsection{Preliminaries}

For $n \in \mathbb{Z}_{\ge 0}$, let $[n] \defeq \{1, 2, \ldots, n\}$.

Let $X \defeq [x_1, x_2, \ldots, x_n]$ be a sequence of $n$ numbers where $x_i \in [0, 1]$.
For any function $f: [0, 1] \mapsto [0, 1]$, define $f(X) \defeq [f(x_i): i \in [n]]$
and define $\Sum(X) \defeq \sum_{i=1}^n x_i$.

\begin{definition}
For a function $f: [0, 1] \mapsto [0, 1]$, the max-knapsack-profit is defined as
the maximum value of $\Sum(f(X))$ over all sequences $X$ where $\Sum(X) \le 1$.
Equivalently, this is the maximum profit attainable when we have to pack items
into a knapsack of size 1 and each item of size $x$ has profit $f(x)$.
\end{definition}

\subsection{Formal Statement of Results}

Let $k \in \mathbb{Z}_{\ge 1}$ and $\mu \in [0, k]$ be constants.
The harmonic function $f_k: [0, 1] \mapsto [0, 1]$ is defined as
\[ f_k(x) = \begin{cases}
{\displaystyle \frac{1}{j}} & {\displaystyle x \in \left(\frac{1}{j+1}, \frac{1}{j}\right]}
    \; \forall j \in [k-1]
\\[10pt] \mu x & {\displaystyle x \in \left[0, \frac{1}{k}\right]}
\end{cases} \]
Lee and Lee \cite{leelee} use $\mu = k/(k-1)$ and Caprara \cite{caprara2008} uses $\mu = k/(k-2)$.

Our aim is to find the max-knapsack-profit of $f_k$.
We can express this as an optimization problem $\LP(k, \mu)$:
\[ \sup_{n > 0} \left( \sup_{X \in \mathbb{R}^n_{\ge 0}} \Sum(f_k(X))
\textrm{ where } \Sum(X) \le 1 \right) \]

Our first important result states that we can find $\opt(\LP(k, \mu))$ by
solving an integer linear program.
\begin{theorem}
\label{thm:ip-eq-lp}
Let $\IP(k, \mu)$ be the following integer linear program:
\[ \min_{z \in \mathbb{Z}_{\ge 0}^{k-1}}
\mu + \sum_{j=1}^{k-1} z_j\left(\frac{1}{j} - \frac{\mu}{j+1}\right)
\textrm{ where } \sum_{j=1}^{k-1} \frac{z_j}{j+1} < 1 \]
Then $\opt(\LP(k, \mu)) = \opt(\IP(k, \mu))$.
\end{theorem}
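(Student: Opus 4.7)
My approach exploits the piecewise structure of $f_k$: it is constant ($=1/j$) on each interval $(1/(j+1),1/j]$ for $j \in [k-1]$ and linear on $[0,1/k]$. This reduces the continuous supremum to a discrete optimization over integer vectors $z$ that count the ``medium'' items by type, plus a single continuous variable $s$ for the total mass of ``tiny'' items in $[0,1/k]$. I will establish the equality by proving two matching inequalities.

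For one direction, I would take an arbitrary $X$ satisfying $\Sum(X) \le 1$, let $z_j$ denote the number of entries of $X$ lying in $(1/(j+1),1/j]$, and let $s$ denote the total mass of entries in $[0,1/k]$. Because each medium item in the $j$\Th~interval has size strictly greater than $1/(j+1)$,
\[ \sum_{j=1}^{k-1} \frac{z_j}{j+1} + s \;\le\; \Sum(X) \;\le\; 1, \]
with the first step strict whenever some $z_j \ge 1$. Substituting the resulting bound $s \le 1 - \sum_j z_j/(j+1)$ into $\Sum(f_k(X)) = \sum_j z_j/j + \mu s$ yields
\[ \Sum(f_k(X)) \;\le\; \mu + \sum_{j=1}^{k-1} z_j\!\left(\frac{1}{j} - \frac{\mu}{j+1}\right), \]
while $z = (z_1,\ldots,z_{k-1})$ is IP-feasible (either $z=0$, making $0<1$ trivially, or some $z_j \ge 1$, in which case the strict step above forces $\sum_j z_j/(j+1) < 1$). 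Thus every value achieved by the LP is matched by the IP objective at a feasible $z$.

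For the reverse direction, given any IP-feasible $z$, I would construct a sequence $X_\delta$ consisting of $z_j$ copies of size $1/(j+1) + \delta$ for each $j \in [k-1]$ together with many identical tiny items in $[0,1/k]$ whose total mass is $1 - \sum_j z_j(1/(j+1) + \delta)$. The strict IP constraint $\sum_j z_j/(j+1) < 1$ guarantees this residual is positive for all sufficiently small $\delta > 0$, so $X_\delta$ is LP-feasible and each size lies in its intended subinterval. A direct calculation then gives
\[ \Sum(f_k(X_\delta)) \;=\; \mu + \sum_{j=1}^{k-1} z_j\!\left(\frac{1}{j} - \frac{\mu}{j+1}\right) - \mu\delta\sum_{j=1}^{k-1} z_j, \]
which converges to the IP objective at $z$ as $\delta \to 0^+$, so the LP value is at least this quantity for every IP-feasible $z$.

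The main subtlety is the strict inequality in the IP constraint, which mirrors the fact that each medium item must have size \emph{strictly} above $1/(j+1)$; this is precisely why $\LP(k,\mu)$ is phrased as a supremum that need not be attained, and why the ``$\delta \to 0$'' limiting argument is unavoidable. A minor case to verify separately is $z=0$, where the construction collapses to a pure tiny-items packing achieving profit exactly $\mu$, matching the IP objective at the origin.
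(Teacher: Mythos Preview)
Your proposal is correct and follows essentially the same approach as the paper: in both directions you count medium items by interval to get an IP-feasible $z$, bound $\Sum(f_k(X))$ by the IP objective via $s \le 1-\sum_j z_j/(j+1)$, and for the reverse direction construct near-optimal sequences by placing $z_j$ items just above $1/(j+1)$ and filling the remainder with tiny items, then letting the perturbation tend to $0$. The only cosmetic differences are that the paper uses $(1+\eps)/(j+1)$ where you use $1/(j+1)+\delta$, and that you explicitly treat the $z=0$ case when verifying IP-feasibility, which the paper's inequality chain technically glosses over.
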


This gives a simple algorithm for computing $\opt(\LP(k, \mu))$:
simply iterate over all the integral solutions to $\IP(k, \mu)$ and report the optimal one.
We will show that this can be done in $O(k!)$ time.

Our second important result is finding a closed-form expression for $\opt(\IP(k, \mu))$.
\begin{theorem}
\label{thm:exact}
Let $k \ge 2$ and $1 \le \mu < 2$. Let
\[ r_j \defeq \begin{cases} 1 & j = 1
\\ r_{j-1}(r_{j-1}+1) & j > 1 \end{cases} \]
and
\[ Q \defeq \argmax_{j \ge 1} \left(r_j \le \ceil{\frac{1}{\max(\mu-1, 1/k)}}-1\right). \]
Then
\[ \opt(\IP(k, \mu)) = \sum_{j=1}^{Q+1} \frac{1}{r_j} + \frac{\mu-1}{r_{Q+1}}. \]
\end{theorem}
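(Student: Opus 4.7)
The plan is to exhibit a feasible $z^*$ attaining the claimed value and then to prove optimality by downward induction on $i$ combined with a ``pricing'' inequality. The key tool throughout is the recurrence $r_{i+1} = r_i(r_i+1)$, which gives the telescoping identity $\frac{1}{r_i+1} = \frac{1}{r_i} - \frac{1}{r_{i+1}}$ and its summed consequence $\sum_{i=1}^Q \frac{1}{r_i+1} = 1 - \frac{1}{r_{Q+1}}$.

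\textbf{Achievability.} Set $z^*_{r_i} = 1$ for $i \in [Q]$ and $z^*_j = 0$ otherwise. Since $r_Q \le k-1$ by the definition of $Q$, $z^* \in \mathbb{Z}_{\ge 0}^{k-1}$. Feasibility $\sum_{i=1}^Q \frac{1}{r_i+1} < 1$ follows from the summed identity; substituting into the objective and telescoping again yields exactly $\sum_{j=1}^{Q+1}\frac{1}{r_j} + \frac{\mu-1}{r_{Q+1}}$.

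\textbf{Optimality.} Write $w_j \defeq 1/j - \mu/(j+1)$ and $b_j \defeq 1/(j+1)$. Since $w_j > 0 \iff j < 1/(\mu-1)$, setting $z_j = 0$ for any $j$ with $w_j \le 0$ weakly increases the objective while preserving feasibility, so we may assume $\support(z) \subseteq [K]$ where $K \defeq \lceil 1/\max(\mu-1, 1/k)\rceil - 1$; the definition of $Q$ gives $r_Q \le K < r_{Q+1}$. I would then prove by downward induction on $i$ (from $i = Q+1$ to $i = 1$) the strengthened claim: for any $z$ with $\support(z) \subseteq \{r_i, \ldots, K\}$ and $\sum z_j b_j < 1/r_i$, one has $\sum z_j w_j \le V_i \defeq \sum_{j=i}^Q 1/r_j - \mu/r_i + \mu/r_{Q+1}$; setting $i = 1$ and adding $\mu$ recovers the Theorem. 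Each inductive step splits on $z_{r_i} \in \{0, 1\}$ (forced since $2 b_{r_i} \ge 1/r_i$). The case $z_{r_i} = 1$ reduces to the inductive hypothesis on the residual, whose support is driven into $\{r_{i+1}, \ldots, K\}$ by the tightened budget $< 1/r_{i+1}$.

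\textbf{Main obstacle.} The main challenge is the case $z_{r_i} = 0$, where the induction hypothesis does not directly apply because the budget $< 1/r_i$ is too generous for the support $\{r_i + 1, \ldots, K\}$. I expect to handle this via a pricing argument: set $p' \defeq w_{r_i+1}/b_{r_i+1} = (r_i+2)/(r_i+1) - \mu$, verify via $r_{i+1} = r_i(r_i+1)$ that $w_j \le p' b_j$ for every $j \ge r_i + 1$ (with equality at $j = r_i+1$), and conclude $\sum z_j w_j \le p' \sum z_j b_j < p'/r_i$. The identity $(r_i+1)/r_{i+1} = 1/r_i$ then gives $p'/r_i = V_i - \big(\sum_{j=i+2}^Q 1/r_j + \mu/r_{Q+1}\big) \le V_i$, closing the induction.
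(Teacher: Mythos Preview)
Your plan is correct and the key computations check out: the telescoping identities, the forcing $z_{r_i}\in\{0,1\}$ via $2b_{r_i}\ge 1/r_i$, the reduction of the residual support to $\{r_{i+1},\ldots,K\}$ after peeling off $z_{r_i}=1$, the pricing bound $w_j/b_j = 1/j-(\mu-1)$ being decreasing in $j$, and the final identity $p'/r_i = 1/r_i + 1/r_{i+1} - \mu/r_i = V_i - \bigl(\sum_{j=i+2}^{Q}1/r_j + \mu/r_{Q+1}\bigr)$ all go through exactly as you describe. One small point to make explicit when you write it up: in the pricing step you need $p'>0$ to pass from $\sum z_j b_j < 1/r_i$ to $p'\sum z_j b_j < p'/r_i$; this holds because $r_i+1\le K < 1/(\mu-1)$ whenever the support $\{r_i+1,\ldots,K\}$ is nonempty, and when it is empty the bound $0\le V_i$ is immediate from $V_i=\sum_{j=i}^{Q} w_{r_j}$.

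The paper's own argument is the same idea in a non-inductive dress. Instead of inducting downward, the paper directly defines $q$ as the largest prefix of greedy indices present in an optimal $z^*$ (i.e., the last $t$ with $z^*_{r_1},\ldots,z^*_{r_t}>0$), proves in one shot (\cref{thm:opt-zeros}) that $z^*$ equals $1$ on $R_q$ and vanishes on $[r_{q+1}]\setminus R_q$, and then makes a single two-case split: either $r_{q+1}\ge m$, in which case $z^*$ is exactly the greedy vector $z^{(q)}$, or $r_{q+1}\le m-1$, in which case the tail $\sum_{i\ge r_{q+1}+1} z_i^* w_i$ is bounded by the very same pricing coefficient $1/(r_{q+1}+1)-(\mu-1)$ that appears as your $p'$ at level $i=q+1$. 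Unrolling your induction until the first step where $z_{r_i}=0$ reproduces exactly the paper's computation. What your packaging buys is a single uniform statement $V_i$ to carry through, avoiding the separate ``structure lemma'' and case split; what the paper's packaging buys is a more explicit picture of the optimal solution as the greedy prefix plus a priced tail.
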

It is easy to prove that $\opt(\LP(k, \mu)) = \mu$ when $\mu \ge 2$
(see \cref{thm:simple-bound}) or $k=1$.

Our third important result is about the limiting behavior of $\LP(k, \mu)$
when $k \to \infty$ and $\mu = k/(k-1)$. Specifically,
\[ T_{\infty} \defeq \lim_{k \to \infty} \opt\left(\IP\left(k, \frac{k}{k-1}\right)\right)
\approx 1.691030206757254 \]
We get the same constant $T_{\infty}$ for $\mu = k/(k-2)$ and $\mu = k(k-2)/(k^2-3k+1)$.

\section{Simple Bounds}

When $k = 1$, $f_k(x) = \mu x$ for all $x \in [0, 1]$.
Therefore, $\opt(\LP(\mu, k)) = \mu$.

Let $I_j$ be the interval $(\frac{1}{j+1}, \frac{1}{j}]$ when $j \in [k-1]$
and $[0, 1/k]$ when $j = k$.

\begin{lemma}
\label{thm:fk-ratio}
$f_k(x) \le \max(\mu, 2)x$.
\end{lemma}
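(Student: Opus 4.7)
The plan is a straightforward case analysis on which interval $I_j$ contains $x$. The statement is only interesting on the intervals $I_1, \ldots, I_{k-1}$ where $f_k$ takes the constant value $1/j$; on the last interval $I_k = [0, 1/k]$ the function is already linear with slope $\mu \le \max(\mu, 2)$, so the inequality is immediate.

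For $x \in I_j = \bigl(\tfrac{1}{j+1}, \tfrac{1}{j}\bigr]$ with $j \in [k-1]$, the idea is that the minimum of the right-hand side $\max(\mu,2)\, x$ on this interval is approached as $x \to 1/(j+1)$, so it suffices to check
\[
\frac{1}{j} \;\le\; \max(\mu, 2)\cdot\frac{1}{j+1},
\quad\text{i.e.,}\quad
\max(\mu,2) \;\ge\; \frac{j+1}{j} \;=\; 1 + \frac{1}{j}.
\]
Since $j \ge 1$, we have $1 + 1/j \le 2 \le \max(\mu, 2)$, which gives the desired bound. (In fact, because the inequality $x > 1/(j+1)$ is strict, one even obtains $f_k(x) < 2x$ on these intervals, though we only need the weak inequality.)

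There is no real obstacle here — the lemma is just the observation that the ``worst'' step of the harmonic staircase occurs at $j=1$, where the ratio $f_k(x)/x$ can get arbitrarily close to $2$. The constant $2$ in $\max(\mu, 2)$ is precisely this worst-case slope, and the $\mu$ handles the linear tail $[0, 1/k]$ in the case $\mu > 2$. Putting the two cases together completes the proof.
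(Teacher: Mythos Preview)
Your proof is correct and follows essentially the same approach as the paper: a case split between $I_k$ (where $f_k(x)=\mu x$) and the intervals $I_j$ for $j\in[k-1]$, where one bounds the ratio $f_k(x)/x$ by $1+1/j\le 2$. Your phrasing via the infimum of the right-hand side at $x\to 1/(j+1)$ is just the contrapositive of the paper's computation of the supremum of $f_k(x)/x$ on $I_j$, and the parenthetical remark about the strict inequality matches exactly what the paper obtains.
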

\begin{proof}
If $x \in I_k$, then $f_k(x)/x = \mu$.
If $x \in I_j$, where $j \in [k-1]$, then
\[ x > \frac{1}{j+1}
\implies \frac{f_k(x)}{x} = \frac{1/j}{x} < 1 + \frac{1}{j} \le 2  \qedhere \]
\end{proof}

\begin{lemma}
\label{thm:simple-bound}
$\mu \le \opt(\LP(k, \mu)) \le \max(\mu, 2)$.
\end{lemma}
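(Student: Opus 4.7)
The plan is to prove the two inequalities separately, with the upper bound flowing directly from the pointwise estimate in \cref{thm:fk-ratio} and the lower bound coming from an explicit feasible sequence.

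For the upper bound, I would take any sequence $X = [x_1, \ldots, x_n]$ with $\Sum(X) \le 1$ and apply \cref{thm:fk-ratio} term-by-term:
\[ \Sum(f_k(X)) = \sum_{i=1}^n f_k(x_i) \le \max(\mu, 2) \sum_{i=1}^n x_i = \max(\mu, 2)\Sum(X) \le \max(\mu, 2). \]
Taking the supremum over all such $X$ and all $n$ yields $\opt(\LP(k, \mu)) \le \max(\mu, 2)$.

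For the lower bound, the idea is to exhibit a feasible sequence whose profit is exactly $\mu$. I would pick any integer $n \ge k$ and let $X$ consist of $n$ copies of $1/n$. Then $\Sum(X) = 1$, and since $1/n \le 1/k$ we have $f_k(1/n) = \mu/n$, so $\Sum(f_k(X)) = n \cdot \mu/n = \mu$. Hence $\opt(\LP(k, \mu)) \ge \mu$, completing the chain $\mu \le \opt(\LP(k, \mu)) \le \max(\mu, 2)$.

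There is no real obstacle here: the upper bound is a one-line consequence of the preceding lemma, and the lower bound just requires naming one sequence. The only minor subtlety is handling the $k=1$ case, but then $I_k = [0,1]$ and the single item $x_1 = 1$ already gives $f_k(1) = \mu$, so the same construction works.
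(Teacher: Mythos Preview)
Your proposal is correct and matches the paper's own proof essentially line for line: the paper uses \cref{thm:fk-ratio} pointwise for the upper bound and exhibits the sequence of $k$ copies of $1/k$ for the lower bound, which is just the $n=k$ instance of your construction.
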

\begin{proof}
For $X = [1/k: i \in [k]]$. Then $\Sum(X) = 1$ and $\Sum(f_k(X)) = \mu$.
Therefore, $\mu \le \opt(\LP(k, \mu))$.

For any vector $X$, by \cref{thm:fk-ratio}, $\Sum(f_k(X)) \le \max(\mu, 2)\Sum(X)$.
Therefore, $\opt(\LP(k, \mu)) \le \max(\mu, 2)$.
\end{proof}

\Cref{thm:simple-bound} implies that when $\mu \ge 2$, then $\opt(\LP(k, \mu)) = \mu$.

\section{Reduction to an Integer Program}

To find $\opt(\LP(k, \mu))$, we need to explore properties of
(near-)optimal solutions to $\LP(k, \mu)$.

\begin{observation}
\label{obs:dec-in-group}
Let $X$ be a set of numbers such that $\Sum(X) \le 1$ and $x \in X \cap I_j$, where $j \in [k-1]$.
If we replace $x$ by $x' = (1+\eps)/(j+1)$, then for sufficiently small $\eps$,
$x' \in I_j$ and $x' < x$. This decreases $\Sum(X)$, but $\Sum(f_k(X))$ remains the same.
\end{observation}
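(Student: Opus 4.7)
The plan is to verify the four assertions directly from the definitions of $I_j$ and $f_k$, by pinning down an explicit upper bound on $\eps$. The crucial point is that $x \in I_j$ means $x > 1/(j+1)$ strictly, so the gap $x(j+1) - 1$ is positive; this gap is precisely what provides room to pick a replacement point $x'$ that is still in $I_j$ but strictly smaller than $x$.

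Concretely, I would choose any $\eps$ with $0 < \eps < \min(x(j+1)-1,\; 1/j)$. To verify $x' \in I_j$, I would check the two endpoints separately: since $\eps > 0$ we get $x' = (1+\eps)/(j+1) > 1/(j+1)$, and since $\eps \le 1/j$ is equivalent to $(1+\eps)/(j+1) \le 1/j$ by cross-multiplication, we also get $x' \le 1/j$. The inequality $x' < x$ then follows from $\eps < x(j+1) - 1$, i.e.\ $1 + \eps < x(j+1)$, which is exactly where the assumption $x \in I_j$ (not merely $x \in \overline{I_j}$) is used.

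With $x' \in I_j$ established, the remaining two claims are immediate. Replacing $x$ by $x'$ decreases $\Sum(X)$ by $x - x' > 0$, and since $j \in [k-1]$ the function $f_k$ is the constant $1/j$ on all of $I_j$, so $f_k(x') = 1/j = f_k(x)$ and $\Sum(f_k(X))$ is preserved.

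There is no real obstacle here; the statement is essentially bookkeeping. The only point one needs to be careful about is making the qualifier \emph{``for sufficiently small $\eps$''} precise by writing down the threshold $\min(x(j+1)-1,\; 1/j)$ and noting that the first term is positive exactly because $x$ sits strictly above the lower endpoint of $I_j$.
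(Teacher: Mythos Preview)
Your argument is correct; the paper states this as a bare observation without proof, and your proposal supplies exactly the straightforward verification that the phrase ``for sufficiently small $\eps$'' is hiding, namely the explicit threshold $\min(x(j+1)-1,\,1/j)$.
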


\begin{observation}
\label{obs:inc-sum}
For any set $X$ of numbers, if $\Sum(X) < 1$, then we can add numbers from $(0, 1/k]$
such that $\Sum(X)$ becomes 1. Doing this will increase $\Sum(f_k(X))$.
\end{observation}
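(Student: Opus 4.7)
The plan is a direct explicit construction. Let $s \defeq 1 - \Sum(X)$, which is strictly positive by hypothesis. The first step is to produce a finite list of numbers in $(0, 1/k]$ summing to exactly $s$; the second step is to check that appending this list to $X$ does indeed raise $\Sum(f_k(X))$.

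For the construction step, I would pick $m \defeq \lceil ks \rceil$ (a positive integer, since $s > 0$) and take $m$ copies of the single value $y \defeq s/m$. The two required properties follow immediately: $my = s$ gives the correct total, and $m \ge ks$ forces $y \le 1/k$, so each of the $m$ copies lies in $(0, 1/k]$. Appending these copies to $X$ yields an augmented sequence $X'$ with $\Sum(X') = \Sum(X) + s = 1$, as desired.

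For the monotonicity step, the key fact is that every appended entry $y$ falls in the top piece $[0, 1/k]$ of the domain of $f_k$, on which $f_k$ takes the linear form $f_k(y) = \mu y$. Hence
\[ \Sum(f_k(X')) - \Sum(f_k(X)) = m \cdot f_k(y) = m \mu y = \mu s. \]
Since $\mu > 0$ in every regime of interest (Lee and Lee's $\mu = k/(k-1)$ and Caprara's $\mu = k/(k-2)$ are both strictly positive), this quantity is strictly positive, which establishes the claimed increase.

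There is essentially no obstacle here; the proof is a one-line partitioning argument. The only minor caveat worth flagging is that for the degenerate value $\mu = 0$ the quantity $\mu s$ vanishes, so the observation should be read as ``does not decrease'' rather than ``strictly increases'' in that corner; this case plays no role in the downstream applications.
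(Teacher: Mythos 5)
Your construction is correct, and it matches the spirit of what the paper does: the observation is stated without proof as self-evident, and the analogous filling step in the proof of \cref{thm:ip-eq-lp} uses essentially the same idea (repeatedly adding numbers of size at most $1/k$ until the total reaches $1$). Your caveat about $\mu = 0$ is a fair point, since the paper nominally allows $\mu \in [0,k]$, but as you note it is irrelevant to all downstream uses, where $\mu \ge 1$.
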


Let $X$ be a near-optimal solution to $\LP(k, \mu)$.
Let $z_j \defeq |X \cap I_j|$.
Since the sum of numbers in $X - I_k$ should be at most 1, and each number in $I_j$ is
strictly larger than $1/(j+1)$, we get
\[ \sum_{j=1}^{k-1} \frac{z_j}{j+1} < 1 \]
By \cref{obs:dec-in-group}, we can assume that numbers in $X \cap I_j$ are slightly larger
than $1/(j+1)$, so by \cref{obs:inc-sum}, we get that $\Sum(X \cap I_k)$ is roughly equal to
\[ 1 - \sum_{j=1}^{k-1} \frac{z_j}{j+1} \]
Therefore, $\Sum(f_k(X))$ is roughly equal to
\[ \sum_{j=1}^{k-1} \frac{z_j}{j} + \mu\left(1 - \sum_{j=1}^{k-1} \frac{z_j}{j+1}\right)
= \mu + \sum_{j=1}^{k-1} z_j\left(\frac{1}{j} - \frac{\mu}{j+1}\right) \]
This suggests that we only need to determine $z_j$ for each $j \in [k-1]$
by solving the following integer LP (which we denote by $\IP(k, \mu)$)
and then we can get $X$ using \cref{obs:dec-in-group,obs:inc-sum}.
\[ \min_{z \in \mathbb{Z}_{\ge 0}^{k-1}}
\mu + \sum_{j=1}^{k-1} z_j\left(\frac{1}{j} - \frac{\mu}{j+1}\right)
\textrm{ where } \sum_{j=1}^{k-1} \frac{z_j}{j+1} < 1 \]
This is a sketch of why $\opt(\LP(k, \mu))$ and $\opt(\IP(k, \mu))$ should be equal;
we will soon present a formal proof.
For notational convenience, define $\score(z, k, \mu)$ and $\cost(z, k)$ to be
the objective value of $z$ and the LHS of the constraint, respectively, i.e.,
\begin{align*}
\score(z, k, \mu) &\defeq
\mu + \sum_{j=1}^{k-1} z_j\left(\frac{1}{j} - \frac{\mu}{j+1}\right)
& \cost(z, k) &\defeq \sum_{j=1}^{k-1} \frac{z_j}{j+1}
\end{align*}

\begin{lemma}
\label{thm:ip-time}
$\IP(k, \mu)$ has at most $k!$ feasible solutions.
Moreover, in $O(k!)$ time, we can list all feasible solutions to $\IP(k, \mu)$.
\end{lemma}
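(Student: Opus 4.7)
The plan is to bound each coordinate $z_j$ from the constraint so that the feasible set embeds into a product of small finite ranges, and then realize that bound as a backtracking enumeration whose search tree has $O(k!)$ nodes.

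First, I would observe that for any feasible $z$, the single term $z_j/(j+1)$ is at most the whole sum, which is strictly less than $1$. Hence $z_j < j+1$, i.e., $z_j \in \{0, 1, \ldots, j\}$ for each $j \in [k-1]$. The feasible set is therefore a subset of $\prod_{j=1}^{k-1} \{0, 1, \ldots, j\}$, whose cardinality is $\prod_{j=1}^{k-1}(j+1) = 2 \cdot 3 \cdots k = k!$. This already establishes the counting part of the lemma.

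For the listing part, I would describe a depth-first enumeration that fills $z_1, z_2, \ldots, z_{k-1}$ in order into a shared length-$(k-1)$ buffer, trying $z_j \in \{0, 1, \ldots, j\}$ in turn and backtracking as soon as the running cost $\sum_{i=1}^{\ell} z_i/(i+1)$ reaches or exceeds $1$ (this pruning is safe because future terms are nonnegative). Since the running cost is maintained incrementally, each node of the search tree is processed in $O(1)$ time; at each leaf we emit the current state of the buffer. The number of nodes at depth $\ell$ is at most $\prod_{i=1}^{\ell}(i+1) = (\ell+1)!$, so the whole tree contains at most $\sum_{\ell=0}^{k-1} (\ell+1)! \le 2 \cdot k!$ nodes, yielding an $O(k!)$ running time overall.

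The only subtle point will be the accounting of output time: writing each of the up to $k!$ feasible solutions as a fresh length-$(k-1)$ vector would cost $\Theta(k \cdot k!)$. So the enumeration must be phrased as emitting solutions through a shared mutable buffer (equivalently, as an iterator producing deltas along the backtracking edges). This is the standard convention when the enumeration is consumed by a downstream scan that only spends $O(1)$ amortized work per reported solution, and it is what allows the $k$ factor to be absorbed into the $O(k!)$ bound.
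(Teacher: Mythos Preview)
Your proof is correct and follows the same approach as the paper: bound $z_j \le j$ from the constraint, observe that $\prod_{j=1}^{k-1}(j+1)=k!$, and enumerate the resulting box. Your handling of the running time is in fact more careful than the paper's own proof, which simply iterates over all $k!$ candidate vectors and checks the constraint for each (naively $\Theta(k)$ work per check); your incremental-cost backtracking and the shared-buffer output convention are exactly what is needed to make the $O(k!)$ bound honest.
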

\begin{proof}
Let $z$ be a feasible solution to $\IP(k, \mu)$.
Then $z_j \le j$, otherwise the constraint in $\IP(k, \mu)$ will not be satisfied.
$z_j$ can take $j+1$ possible values, from $0$ to $j$.
Therefore, the total number of feasible solutions is at most $k!$.
To list all feasible solutions to $\IP(k, \mu)$, find all combinations of values of $z_j$
such that $0 \le z_j \le j$ and then check if $\cost(z, k) < 1$.
\end{proof}

\begin{proof}[Proof of \cref{thm:ip-eq-lp}]
We will show that $\opt(\LP(k, \mu)) = \opt(\IP(k, \mu))$.

Let $X$ be a feasible solution to $\LP(k, \mu)$, i.e. $\Sum(X) \le 1$.
Let $X_j$ be the numbers in $X$ that lie in the interval $I_j$, i.e. $X_j \defeq X \cap I_j$.
Let $z_j \defeq |X_j|$. Then $z$ is a feasible solution to $\IP(k, \mu)$, because
\[ \cost(z, k) = \sum_{j=1}^{k-1} \frac{z_j}{j+1} < \sum_{j=1}^{k-1} \sum(X_j)
= \Sum(X) - \Sum(X_k) \le 1 \]

Also,
\begin{align*}
\Sum(f_k(X)) &= \sum_{j=1}^{k-1} \frac{|X_j|}{j} + \mu \Sum(X_k)
\\ &\le \sum_{j=1}^{k-1} \frac{z_j}{j} + \mu \left( 1 - \sum_{j=1}^{k-1} \Sum(X_j)\right)
\\ &< \sum_{j=1}^{k-1} \frac{z_j}{j} + \mu \left( 1 - \sum_{j=1}^{k-1} \frac{z_j}{j+1} \right)
\\ &= \mu + \sum_{j=1}^{k-1} z_j\left(\frac{1}{j} - \frac{\mu}{j+1}\right)
\\ &= \score(z, k, \mu) \le \opt(\IP(k, \mu))
\end{align*}
Since $X$ can be made to be arbitrarily close to $\opt(\LP(k, \mu))$,
$\opt(\LP(k, \mu)) \le \opt(\IP(k, \mu))$.

Let $z$ be the optimal solution to $\IP(k, \mu)$.
Let $s \defeq \cost(z, k) < 1$.
Let $\eps > 0$ be a constant.
Let $X_j$ consist of $z_j$ copies of $(1+\eps)/(j+1)$. Then
\[ \Sum(X - X_k) = (1+\eps)\sum_{j=1}^{k-1} \frac{z_j}{j+1} = (1+\eps)s \]
We want $(1+\eps)s \le 1$ for $X$ to be feasible for $\LP(k, \mu)$,
so enforce the condition $\eps \le 1/s - 1$.

Choose $X_k \subseteq I_k$ such that $\Sum(X_k) = 1 - (1+\eps)s$.
This can be done by adding $1/k$ to $X_k$ while $\Sum(X) \le 1/k$,
and then choosing $1 - \Sum(X)$ as the last number to be added to $X_k$.
Now $\Sum(X) = 1$, so $X$ is feasible for $\LP(k, \mu)$.
\begin{align*}
\opt(\LP(k, \mu)) &\ge \Sum(f_k(X))
\\ &= \sum_{j=1}^{k-1} \frac{z_j}{j} + \mu \Sum(X_k)
\\ &= \sum_{j=1}^{k-1} \frac{z_j}{j} + \mu(1 - (1+\eps)s)
\tag{by definition of $s$}
\\ &= \sum_{j=1}^{k-1} \frac{z_j}{j}
   + \mu\left(1 - \sum_{j=1}^{k-1} \frac{z_j}{j+1}\right) - \mu\eps s
\\ &= \score(z, k, \mu) - \mu\eps s
\\ &> \opt(\IP(k, \mu)) - \mu\eps
\tag{since $s < 1$}
\end{align*}
Since for all $0 < \eps \le 1/s-1$,
$\opt(\LP(k, \mu)) > \opt(\IP(k, \mu)) - \mu\eps$,
we get that $\opt(\LP(k, \mu)) \ge \opt(\IP(k, \mu))$.
\end{proof}

\begin{table}[!ht]
\centering
\begin{tabular}{|c|r|r|r|}
\hline \diagbox{$k$}{$\mu$}
    & ${\displaystyle \frac{k}{k-1}}$
    & ${\displaystyle \frac{k}{k-2}}$
    & ${\displaystyle \frac{k(k-2)}{k^2-3k+1}}$ \\
\hline $2$ & $2$ & -- & -- \\
\hline $3$ & $\sfrac{7}{4} = \texttt{1.75000000}$ & $3$ & $3$ \\
\hline $4$ & $\sfrac{31}{18} = \texttt{1.72222222}$ & $2$ & $\sfrac{9}{5} = \texttt{1.80000000}$ \\
\hline $5$ & $\sfrac{41}{24} = \texttt{1.70833333}$ & $\sfrac{11}{6} = \texttt{1.83333333}$ & $\sfrac{19}{11} = \texttt{1.72727273}$ \\
\hline $6$ & $\sfrac{17}{10} = \texttt{1.70000000}$ & $\sfrac{7}{4} = \texttt{1.75000000}$ & $\sfrac{65}{38} = \texttt{1.71052632}$ \\
\hline $7$ & $\sfrac{61}{36} = \texttt{1.69444444}$ & $\sfrac{26}{15} = \texttt{1.73333333}$ & $\sfrac{148}{87} = \texttt{1.70114943}$ \\
\hline $8$ & $\sfrac{83}{49} = \texttt{1.69387755}$ & $\sfrac{31}{18} = \texttt{1.72222222}$ & $\sfrac{139}{82} = \texttt{1.69512195}$ \\
\hline $9$ & $\sfrac{569}{336} = \texttt{1.69345238}$ & $\sfrac{12}{7} = \texttt{1.71428571}$ & $\sfrac{559}{330} = \texttt{1.69393939}$ \\
\hline $10$ & $\sfrac{320}{189} = \texttt{1.69312169}$ & $\sfrac{41}{24} = \texttt{1.70833333}$ & $\sfrac{2525}{1491} = \texttt{1.69349430}$ \\
\hline $11$ & $\sfrac{237}{140} = \texttt{1.69285714}$ & $\sfrac{46}{27} = \texttt{1.70370370}$ & $\sfrac{6329}{3738} = \texttt{1.69315142}$ \\
\hline $12$ & $\sfrac{391}{231} = \texttt{1.69264069}$ & $\sfrac{17}{10} = \texttt{1.70000000}$ & $\sfrac{3875}{2289} = \texttt{1.69287899}$ \\
\hline \end{tabular}

\caption{Value of $\IP(k, \mu)$ for small $k$ and some important $\mu$,
computed using the algorithm of \cref{thm:ip-time}.}
\label{table:ip-values}
\end{table}

Let $z$ be a feasible solution to $\IP(k, \mu)$.
In $\score(z, k, \mu)$, the coefficient of $z_j$ is $(1/j - \mu/(j+1))$.
For all $j$, decrease $z_j$ to 0 iff $(1/j - \mu/(j+1))$ is non-positive.
Let $z'$ be the modified solution.
Then $z'$ is feasible for $\IP(k, \mu)$ and $\score(z', k, \mu) \ge \score(z, k, \mu)$.
Therefore, we can assume \wLoG{} that in any optimal solution $z$ to $\IP(k, \mu)$,
$z_j$ is 0 if $(1/j - \mu/(j+1))$ is non-positive.
Let
\[ m \defeq \argmax_{1 \le j \le k-1} \left(\frac{1}{j} - \frac{\mu}{j+1} > 0\right) \]
Note that $m$ is well-defined iff $\mu < 2$ and $k \ge 2$,
because then for $j=1$, $1/j - \mu/(j+1)$ is positive and $1 \le j \le k-1$.
\textbf{Henceforth, we will assume that $k \ge 2$ and $\mu < 2$.}

\begin{lemma}
\label{thm:m}
$m = \ceil{1/\max(\mu - 1, 1/k)} - 1$.
\end{lemma}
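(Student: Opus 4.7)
The plan is to rewrite the defining inequality $1/j - \mu/(j+1) > 0$ in a form that directly exposes $\mu - 1$, and then reconcile the resulting bound on $j$ with the upper bound $j \le k - 1$ coming from the range of the $\argmax$. Clearing denominators, the inequality is equivalent to $(j+1) - \mu j > 0$, i.e.\ $j(\mu - 1) < 1$. So $m$ is simply the largest integer $j$ in $[1, k-1]$ satisfying $j(\mu - 1) < 1$.

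I would then split on whether $\mu = 1$. When $\mu = 1$ every positive $j$ qualifies and $m = k - 1$; when $\mu > 1$ the inequality becomes $j < 1/(\mu - 1)$, whose largest integer solution is $\lceil 1/(\mu - 1) \rceil - 1$ (the strict inequality is exactly what makes ``ceiling minus one'' the right expression, independent of whether $1/(\mu-1)$ is itself an integer). Intersecting with the constraint $j \le k - 1$ yields $m = \min\bigl(k - 1,\; \lceil 1/(\mu-1) \rceil - 1\bigr)$.

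The last step is to verify that this min equals $\lceil 1/\max(\mu-1, 1/k) \rceil - 1$. Using the elementary fact $\lceil x \rceil \le k \iff x \le k$ for integer $k$, the condition $\mu - 1 \ge 1/k$ is equivalent to $\lceil 1/(\mu-1) \rceil - 1 \le k - 1$; in that case the min is $\lceil 1/(\mu-1) \rceil - 1$, while $\max(\mu-1, 1/k) = \mu - 1$, so the two expressions coincide. In the opposite regime $\mu - 1 < 1/k$ (which includes $\mu = 1$), the min collapses to $k - 1$, and $\max(\mu-1, 1/k) = 1/k$ gives $\lceil k \rceil - 1 = k - 1$ as well. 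I do not foresee any serious obstacle; the only mild subtlety is keeping the strict inequality consistent with ``ceiling minus one'', and that is precisely what makes the $\mu = 1$ boundary fall out of the single unified formula through the $\max$ with $1/k$.
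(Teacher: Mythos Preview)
Your argument is essentially the paper's, just organized as a case split rather than a single chain of equivalences. The paper rewrites $1/j - \mu/(j+1) > 0$ as $1/j > \mu - 1$, rewrites $j \le k-1$ as $1/j > 1/k$, and immediately merges them into $1/j > \max(\mu-1, 1/k)$; since this max is always positive, one inversion and one ``strict inequality $\Leftrightarrow$ ceiling minus one'' step finish the proof with no cases. You instead first produce $m = \min\bigl(k-1,\ \lceil 1/(\mu-1)\rceil - 1\bigr)$ and then reconcile it with the $\max$ formula, which is fine but a bit longer.

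One small gap: the lemma is stated under the standing hypotheses $k \ge 2$ and $\mu < 2$, with $\mu \in [0,k]$ allowed from the outset, so $\mu < 1$ is a legitimate case. Your split only covers $\mu = 1$ and $\mu > 1$; for $\mu < 1$ your intermediate expression $\lceil 1/(\mu-1)\rceil - 1$ is negative and the $\min$ formula gives the wrong answer. The fix is trivial (for $\mu \le 1$ the inequality $j(\mu-1) < 1$ holds for all $j \ge 1$, so $m = k-1$, and $\max(\mu-1,1/k) = 1/k$ gives the same), but it should be stated. The paper's max-first formulation avoids this wrinkle automatically because $\max(\mu-1,1/k) > 0$ regardless of the sign of $\mu - 1$.
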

\begin{proof}
\begin{align*}
& j \le m
\\ &\iff j \le k-1 \wedge (1/j - \mu/(j+1) > 0)
\\ &\iff j < k \wedge 1/j > \mu - 1
\\ &\iff 1/j > \max(\mu-1, 1/k)
\\ &\iff j < 1/\max(\mu-1, 1/k)
\\ &\iff j \le \ceil{1/\max(\mu-1, 1/k)}-1
\qedhere \end{align*}
\end{proof}

\section{Greedy Algorithm and Harmonic Numbers}

Increasing $z_i$ by 1 increases $\score(z, k, \mu)$ by $(1/i - \mu/(i+1))$
and increases $\cost(z, k)$ by $1/(i+1)$.
The ratio of these increases is $1/i - (\mu-1)$, which we call \emph{bang-per-buck}.
Intuitively, to maximize $\score$, we should try to have large values of $z_i$
for small indices $i$, because that gives us a larger bang-per-buck.
This strategy suggests a greedy algorithm, i.e. start with $z = \lavec{0}$,
and repeatedly increase $z_i$ by 1 for the smallest index $i \le m$ such that
the $\cost(z, k)$ continues to be less than 1.

For sufficiently large $k$ and small $\mu$, a simple calculation
shows that the indices picked by the greedy algorithm are
1, 2, 6, 42, etc. We'll now formalize this pattern of numbers.

\begin{definition}[Harmonic number]
Define the $j\Th$ harmonic number $r_j$ as
\[ r_j \defeq \begin{cases} 1 & j = 1
\\ r_{j-1}(r_{j-1}+1) & j > 1 \end{cases} \]
\end{definition}

\begin{table}[!ht]
\centering
\begin{tabular}{|c|c|c|c|c|c|c|c|}
\hline
$j$ & 1 & 2 & 3 & 4 & 5 & 6 & 7 \\
\hline
$r_j$ & 1 & 2 & 6 & 42 & 1806 & 3263442 & 10650056950806 \\
\hline
\end{tabular}
\caption{First few harmonic numbers.}
\end{table}

\begin{lemma}
\label{thm:hn-series}
\[ \sum_{j=1}^t \frac{1}{r_j + 1} = 1 - \frac{1}{r_{t+1}} \]
\end{lemma}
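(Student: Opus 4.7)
The plan is to prove the identity by recognizing a telescoping structure hidden in the recursion for $r_j$. Specifically, from the recurrence $r_{j+1} = r_j(r_j+1)$, I would rewrite the summand using partial fractions:
\[ \frac{1}{r_j + 1} = \frac{(r_j+1) - r_j}{r_j(r_j+1)} = \frac{1}{r_j} - \frac{1}{r_j(r_j+1)} = \frac{1}{r_j} - \frac{1}{r_{j+1}}. \]
This is the key identity; once it is in hand, the rest is immediate.

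With that rewriting, the sum telescopes:
\[ \sum_{j=1}^t \frac{1}{r_j+1} = \sum_{j=1}^t \left(\frac{1}{r_j} - \frac{1}{r_{j+1}}\right) = \frac{1}{r_1} - \frac{1}{r_{t+1}} = 1 - \frac{1}{r_{t+1}}, \]
using $r_1 = 1$ at the final step. An alternative (but essentially equivalent) route would be a straightforward induction on $t$: the base case $t=1$ reduces to $\tfrac{1}{2} = 1 - \tfrac{1}{2}$ using $r_2 = 2$, and in the inductive step the difference $\tfrac{1}{r_{t+1}+1} - \tfrac{1}{r_{t+1}}$ simplifies to $-\tfrac{1}{r_{t+1}(r_{t+1}+1)} = -\tfrac{1}{r_{t+2}}$ by the defining recurrence.

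There is no real obstacle here; the only content is noticing the partial-fraction decomposition that matches the recurrence $r_{j+1} = r_j(r_j+1)$. I would favor the telescoping presentation because it explains why the closed form has the particular shape $1 - 1/r_{t+1}$ and because it will likely be reused elsewhere in the paper when analyzing the greedy algorithm whose picked indices are exactly $r_1, r_2, \ldots$.
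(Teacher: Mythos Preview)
Your proof is correct and is essentially the same as the paper's: both derive the identity $\frac{1}{r_j+1} = \frac{1}{r_j} - \frac{1}{r_{j+1}}$ from the recurrence $r_{j+1} = r_j(r_j+1)$ and then telescope. The only cosmetic difference is that the paper obtains this identity by computing $\frac{1}{r_j+1} + \frac{1}{r_{j+1}} = \frac{1}{r_j}$ directly, whereas you reach it via partial fractions.
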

\begin{proof}
\[ \frac{1}{r_j + 1} + \frac{1}{r_{j+1}}
= \frac{1}{r_j + 1} + \frac{1}{r_j(r_j + 1)}
= \frac{1}{r_j + 1}\left(1 + \frac{1}{r_j}\right)
= \frac{1}{r_j} \]
\[ \sum_{j=1}^t \frac{1}{r_j + 1}
= \sum_{j=1}^t \left( \frac{1}{r_j} - \frac{1}{r_{j+1}} \right)
= 1 - \frac{1}{r_{t+1}}  \qedhere \]
\end{proof}

Let the first $t$ indices picked by the greedy algorithm be
$R = \{r_1, r_2, \ldots, r_t\}$.
Let $z_i$ be 1 if $i \in R$ else 0.
By \cref{thm:hn-series}, $\cost(z, k) = \sum_{j=1}^t 1/(r_t + 1) = 1 - 1/r_{(t+1)}$.
So the next index to increase must be at least $r_{(t+1)}$,
and the greedy algorithm will pick it if $r_{(t+1)} \le m$.
This explains the pattern of numbers chosen by the greedy algorithm.

\subsection{Lower-Bounding \texorpdfstring{$\opt(\IP(k, \mu))$}{opt(IP(k,mu))}
Using the Greedy Algorithm}

Let $Q \defeq \argmax_{j \ge 1} (r_j \le m)$.
$Q$ is well-defined when $\mu < 2$ and $k \ge 2$, since then $m \ge 1$.
Note that $r_Q \le m \le k-1$.

For $t \ge 0$, define $S_t \defeq \sum_{j=1}^t 1/r_j$
and $R_t \defeq \{r_1, r_2, \ldots, r_t\}$.

\begin{definition}
Let $z^{(q)} \in \mathbb{Z}^{k-1}_{\ge 0}$ be a vector for which $z_i = 1$
when $i \in R_q$ and $z_i = 0$ otherwise.
\end{definition}
Note that $z^{(Q)}$ is the output of the greedy algorithm.
$z^{(q)}$ is feasible because
\[ \cost(z^{(q)}, k) = \sum_{i=1}^{k-1} \frac{z^{(q)}_i}{i+1}
= \sum_{j=1}^q \frac{1}{r_j+1} = 1 - \frac{1}{r_{q+1}} < 1 \]

\begin{lemma}
\label{thm:greedy-perf}
$\score(z^{(q)}, k, \mu) = S_{q+1} - (\mu-1)/r_{q+1}$.
\end{lemma}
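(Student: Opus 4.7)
The plan is to compute $\score(z^{(q)}, k, \mu)$ directly from its definition, using \cref{thm:hn-series} to evaluate the key telescoping sum. Since $z^{(q)}_i = 1$ exactly when $i \in R_q = \{r_1, \ldots, r_q\}$ and is $0$ otherwise, the sum defining $\score$ collapses to
\[ \score(z^{(q)}, k, \mu) = \mu + \sum_{j=1}^q \left(\frac{1}{r_j} - \frac{\mu}{r_j+1}\right). \]
This is the only place where the structure of $z^{(q)}$ is used.

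Next, I would split this into the two sums $\sum_{j=1}^q 1/r_j = S_q$ and $\sum_{j=1}^q 1/(r_j+1)$. The second is precisely what \cref{thm:hn-series} evaluates, giving $1 - 1/r_{q+1}$. Substituting yields
\[ \score(z^{(q)}, k, \mu) = \mu + S_q - \mu\left(1 - \frac{1}{r_{q+1}}\right) = S_q + \frac{\mu}{r_{q+1}}. \]

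Finally, to put the answer in the form stated in terms of $S_{q+1}$, I would use the recurrence $S_{q+1} = S_q + 1/r_{q+1}$, i.e., $S_q = S_{q+1} - 1/r_{q+1}$, to rewrite
\[ S_q + \frac{\mu}{r_{q+1}} = S_{q+1} + \frac{\mu - 1}{r_{q+1}}. \]
There is no real obstacle here; the proof is a one-line unfolding followed by a direct invocation of \cref{thm:hn-series}, so the only thing to be careful about is the index shift when converting $S_q$ to $S_{q+1}$ (and I note that my calculation actually produces $S_{q+1} + (\mu-1)/r_{q+1}$, matching the form $\sum_{j=1}^{Q+1} 1/r_j + (\mu-1)/r_{Q+1}$ that appears in \cref{thm:exact}, suggesting the sign on $(\mu-1)/r_{q+1}$ in the lemma statement should be read as $+$).
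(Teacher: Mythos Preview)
Your proof is correct and follows essentially the same steps as the paper's own proof: unfold the definition of $\score(z^{(q)},k,\mu)$ over $R_q$, apply \cref{thm:hn-series} to collapse $\sum_{j=1}^q 1/(r_j+1)$, and then shift $S_q$ to $S_{q+1}$. You also correctly spot the sign typo in the lemma statement; the paper's own proof (and all subsequent uses, e.g., \cref{thm:ub} and \cref{thm:exact}) indeed produce $S_{q+1} + (\mu-1)/r_{q+1}$.
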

\begin{proof}
\begin{align*}
\score(z^{(q)}, k, \mu) &= \mu +
    \sum_{i=1}^{k-1} z^{(q)}_i\left(\frac{1}{i} - \frac{\mu}{i+1}\right)
= \mu + \sum_{j=1}^q \left(\frac{1}{r_j} - \frac{\mu}{r_j + 1}\right)
\\ &= \mu + S_q - \mu\left(1 - \frac{1}{r_{q+1}}\right)
\tag{by \cref{thm:hn-series}}
\\ &= S_q + \frac{\mu}{r_{q+1}}
= S_{q+1} + \frac{\mu-1}{r_{q+1}}
\qedhere \end{align*}
\end{proof}

\begin{lemma}
\label{thm:greedy-is-sort-of-best}
For all $q \le Q-1$, $\score(z^{(q)}, k, \mu) < \score(z^{(Q)}, k, \mu)$.
\end{lemma}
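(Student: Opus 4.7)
The plan is to prove the stronger monotonicity statement that $q \mapsto \score(z^{(q)}, k, \mu)$ is strictly increasing on $\{0, 1, \ldots, Q\}$; the lemma then follows immediately by induction. Heuristically, each greedy step adds an index $r_{q+1}$ whose bang-per-buck $1/r_{q+1} - (\mu - 1)$ is strictly positive for $q + 1 \le Q$, and this is what translates into a strictly larger $\score$.

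The main computation is to apply \cref{thm:greedy-perf} to two consecutive values of $q$ and write the difference
\[ \Delta_q \defeq \score(z^{(q+1)}, k, \mu) - \score(z^{(q)}, k, \mu)
= \frac{1}{r_{q+2}} + (\mu - 1)\left(\frac{1}{r_{q+2}} - \frac{1}{r_{q+1}}\right). \]
Using the recurrence $r_{q+2} = r_{q+1}(r_{q+1}+1)$, I would simplify $1/r_{q+2} - 1/r_{q+1} = -1/(r_{q+1}+1)$ and $1/r_{q+2} = 1/(r_{q+1}(r_{q+1}+1))$, and then factor out $1/(r_{q+1}+1)$ to obtain
\[ \Delta_q = \frac{1 - (\mu - 1) r_{q+1}}{r_{q+1}(r_{q+1} + 1)} . \]
Since the denominator is positive, $\Delta_q > 0$ is equivalent to $(\mu - 1) r_{q+1} < 1$.

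The delicate step, which I expect to be the only real obstacle, is checking $(\mu - 1) r_{q+1} < 1$ for every $0 \le q \le Q - 1$. For $\mu \le 1$ this is automatic because the left side is non-positive. For $\mu > 1$, I would combine two facts: by the definition of $Q$, $r_{q+1} \le r_Q \le m$; and by \cref{thm:m} together with the ceiling identity $\ceil{x} - 1 < x$, we get $m < 1/\max(\mu - 1, 1/k) \le 1/(\mu - 1)$. Together these give $(\mu - 1) r_{q+1} \le (\mu - 1) m < 1$, so $\Delta_q > 0$ for every $q \le Q - 1$, and a one-line induction finishes the argument.
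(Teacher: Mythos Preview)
Your proof is correct. Both your argument and the paper's establish the one-step monotonicity $\score(z^{(q)}, k, \mu) < \score(z^{(q+1)}, k, \mu)$ for each $q \le Q-1$ and then induct, so the overall strategy is the same. The mechanics differ slightly: the paper simply observes that $z^{(q+1)}$ and $z^{(q)}$ agree everywhere except at index $r_{q+1}$, where $z^{(q+1)}$ is larger by~$1$; since $r_{q+1} \le r_Q \le m$, the coefficient $1/r_{q+1} - \mu/(r_{q+1}+1)$ of that coordinate in $\score$ is positive by the very definition of $m$, and the strict increase follows immediately. You instead route the computation through the closed form of \cref{thm:greedy-perf} and simplify $\Delta_q$ algebraically, arriving at the equivalent condition $(\mu-1)r_{q+1} < 1$. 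Indeed, a one-line check shows your $\Delta_q = \bigl[1 - (\mu-1)r_{q+1}\bigr]/\bigl[r_{q+1}(r_{q+1}+1)\bigr]$ is exactly the coefficient $1/r_{q+1} - \mu/(r_{q+1}+1)$, so the two arguments are really the same inequality viewed from two sides. The paper's version is shorter because it avoids invoking \cref{thm:greedy-perf} and the harmonic recurrence; your version has the mild advantage of making the quantitative size of the increment explicit.
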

\begin{proof}
For all $q \le Q-1$, $z_i^{(q)} \le z_i^{(q+1)}$
and $z^{(q)}_{r_{(q+1)}} = 0 < 1 = z^{(q+1)}_{r_{(q+1)}}$.
Therefore, $\score(z^{(q)}, k, \mu) < \score(z^{(q+1)}, k, \mu)$.
\end{proof}

\Cref{thm:greedy-perf} gives us a lower bound of $S_{Q+1} - (\mu - 1)/r_{Q+1}$
on $\opt(\IP(k, \mu))$.

\section{Upper Bound}

Let $z^*$ be an optimal solution to $\IP(k, \mu)$.

For $0 \le t \le Q$, let $R_t \defeq \{r_1, r_2, \ldots, r_t\}$
and let $P(t)$ be the predicate that $z_i > 0$ if $i \in R_t$.
Let $q \defeq \argmax_{0 \le t \le Q} P(t)$.
$q$ is well-defined since $P(0)$ is always true.

\begin{lemma}
\label{thm:opt-zeros}
For $i \in R_q$, $z^*_i = 1$.
If $z_i^* > 0$, then $i \in R_q$ or $i \ge r_{q+1}+1$.
\end{lemma}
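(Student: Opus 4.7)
The plan is to prove the two assertions in order: part (a) pins down the exact cost used up by the $R_q$-indices, which then lets part (b) exploit the tiny leftover budget.

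For part (a), the bound $z^*_{r_j} \ge 1$ is immediate from $P(q)$. For the matching upper bound, I would argue by contradiction: assume $z^*_{r_{j^*}} \ge 2$ for the smallest such $j^* \in [q]$. Since $z^*_{r_j} \ge 1$ for $j < j^*$ (again by $P(q)$), \cref{thm:hn-series} gives
\[
\cost(z^*) \ge \sum_{j=1}^{j^*-1} \frac{1}{r_j+1} + \frac{2}{r_{j^*}+1}
= 1 - \frac{1}{r_{j^*}} + \frac{2}{r_{j^*}+1}
= 1 + \frac{r_{j^*}-1}{r_{j^*}(r_{j^*}+1)} \ge 1,
\]
contradicting the strict inequality $\cost(z^*) < 1$. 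Therefore $z^*_{r_j} = 1$ for every $r_j \in R_q$.

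For part (b), part (a) forces the $R_q$-contribution to the cost to equal $\sum_{j=1}^q 1/(r_j+1) = 1 - 1/r_{q+1}$, so the total cost of the non-$R_q$ entries is strictly less than $1/r_{q+1}$. If $z^*_i > 0$ for some $i \notin R_q$, then $1/(i+1) \le z^*_i/(i+1) < 1/r_{q+1}$, which forces $i \ge r_{q+1}$. When $q < Q$, maximality of $q$ makes $P(q+1)$ fail, so $z^*_{r_{q+1}} = 0$; hence $i \ne r_{q+1}$, and we conclude $i \ge r_{q+1}+1$. When $q = Q$, I would invoke the earlier WLOG reduction that $z^*_i = 0$ whenever $1/i - \mu/(i+1) \le 0$, i.e.\ whenever $i > m$, together with $r_{Q+1} > m$ from the definition of $Q$, to rule out such an $i$ entirely; the ``$i \ge r_{Q+1}+1$'' disjunct is then vacuously consistent.

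The main obstacle is this $q = Q$ edge case, since the second disjunct can be impossible to satisfy when $r_{Q+1} > k-1$; the standing WLOG assumption about indices with non-positive marginal score handles it cleanly. Everything else is straightforward arithmetic driven by the telescoping identity in \cref{thm:hn-series}.
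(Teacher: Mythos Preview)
Your proposal is correct and takes essentially the same approach as the paper: both arguments hinge on the telescoping identity of \cref{thm:hn-series} to show that the $R_q$-indices already consume cost $1-1/r_{q+1}$, leaving too little budget for any extra weight at an index $\le r_{q+1}$, and both dispatch the $q=Q$ edge case via the standing WLOG that $z^*_i=0$ for $i>m$. The only difference is packaging---the paper sets $m'=\min(m,r_{q+1}-1)$ and extracts both conclusions from a single inequality chain, whereas you do part~(a) by a smallest-counterexample contradiction and then feed it into part~(b)---but the substance is identical.
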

\begin{proof}
Let $m' = \min(m, r_{q+1}-1)$.
\begin{align*}
1 &> \cost(z^*, k) \ge \sum_{i=1}^{m'} \frac{z_i^*}{i+1}
\\ &= \sum_{j=1}^{q} \frac{1}{r_j + 1} + \left(\sum_{i \in R_q} \frac{z_i^*-1}{i+1}
    + \sum_{i \in [m'] - R_q} \frac{z_i^*}{i+1} \right)
\\ &= \left(1 - \frac{1}{r_{q+1}}\right) + \left(\sum_{i \in R_q} \frac{z_i^*-1}{i+1}
    + \sum_{i \in [m'] - R_q} \frac{z_i^*}{i+1} \right)
\\ \implies & \frac{1}{r_{q+1}} > \sum_{i \in R_q} \frac{z_i^*-1}{i+1}
    + \sum_{i \in [m'] - R_q} \frac{z_i^*}{i+1}
\\ \implies & 1 > \sum_{i \in R_q} (z_i^*-1) + \sum_{i \in [m'] - R_q} z_i^*
\tag{since $i+1 \le m'+1 \le r_{q+1}$}
\end{align*}
By $P(q)$, we get $z_i^* = 1$ for $i \in R_q$ and $z_i^* = 0$ for $i \in [m'] - R_q$.

If $q = Q$, then $r_q \le m < r_{q+1}$, so $z_i^* > 0$ implies $i \in R_q$.
For $q < Q$, $P(q+1)$ is false, so $z_{r_{(q+1)}} = 0$.
$r_{q+1} \le r_Q \le m$, so $m' = r_{q+1}-1$
and for $i \in [r_{q+1}-1]-R_q$, $z_i^* = 0$.
Therefore, $z_i^* > 0$ implies $i \in R_q$ or $i \ge r_{q+1}+1$.
\end{proof}

\begin{lemma}
\label{thm:ub}
$\score(z^*, k, \mu) \le S_{Q+1} + \max(\mu-1, 0)/r_{Q+1}$.
\end{lemma}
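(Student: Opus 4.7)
The plan is to case-split on whether $q = Q$ or $q < Q$, using \cref{thm:opt-zeros} as the structural backbone and a ``bang-per-buck'' estimate on the residual indices.

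If $q = Q$, then \cref{thm:opt-zeros} combined with the WLOG assumption $z^*_i = 0$ for $i > m$ pins down $z^* = z^{(Q)}$ (since $r_{Q+1} > m$ rules out the alternative $i \ge r_{Q+1}+1$), so \cref{thm:greedy-perf} yields $\score(z^*, k, \mu) = S_{Q+1} + (\mu-1)/r_{Q+1}$; replacing $\mu - 1$ by $\max(\mu-1, 0)$ gives the desired inequality in either sign regime of $\mu - 1$.

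If $q < Q$, I would use \cref{thm:opt-zeros} and \cref{thm:hn-series} to decompose
\[ \score(z^*, k, \mu) = S_q + \frac{\mu}{r_{q+1}} + T, \qquad T \defeq \sum_{i \ge r_{q+1}+1} z^*_i\left(\frac{1}{i} - \frac{\mu}{i+1}\right), \]
and read off from $\cost(z^*, k) < 1$ that $\sum_{i \ge r_{q+1}+1} z^*_i/(i+1) < 1/r_{q+1}$. The key identity is $1/i - \mu/(i+1) = (1 + 1/i - \mu)/(i+1) \le c^*/(i+1)$ for $i \ge r_{q+1}+1$, where $c^* \defeq 1 + 1/(r_{q+1}+1) - \mu$. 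When $c^* \ge 0$, summing gives $T \le c^*/r_{q+1}$, and simplifying with the recursion $r_{q+2} = r_{q+1}(r_{q+1}+1)$ yields $\score(z^*, k, \mu) \le S_q + 1/r_{q+1} + 1/r_{q+2} = S_{q+2} \le S_{Q+1}$, since $q \le Q-1$. When $c^* < 0$, every coefficient $1/i - \mu/(i+1)$ for $i \ge r_{q+1}+1$ is already negative, so by the WLOG reduction preceding \cref{thm:m} we may take $z^*_i = 0$ there, giving $T = 0$ and $\score(z^*, k, \mu) = S_q + \mu/r_{q+1}$.

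The hard part will be closing this last subcase. I would argue from \cref{thm:m} that $c^* < 0$ (equivalently $\mu - 1 > 1/(r_{q+1}+1)$) forces $r_{q+1} \ge m$, and together with $r_{q+1} \le r_Q \le m$ this collapses $r_{q+1} = r_Q$, hence $q = Q - 1$. The target inequality then reduces, after clearing denominators with $r_{Q+1} = r_Q(r_Q+1)$, to $(\mu-1) r_Q \le 1$, which holds because $r_Q \le m < 1/(\mu-1)$. This $c^* < 0$ branch is the main obstacle since it cannot be absorbed into the bang-per-buck bound and requires a separate sign-to-structure implication via \cref{thm:m}.
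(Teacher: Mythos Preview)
Your proposal is correct and follows essentially the same strategy as the paper: use \cref{thm:opt-zeros} to isolate the residual indices $i \ge r_{q+1}+1$, bound their contribution via the bang-per-buck factor $1/i - (\mu-1) \le 1/(r_{q+1}+1) - (\mu-1)$, and combine with \cref{thm:hn-series} to collapse everything to $S_{q+2}$.

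The one point of divergence is the case split. The paper splits on $r_{q+1} \ge m$ versus $r_{q+1} \le m-1$ rather than on $q = Q$ versus $q < Q$. In the paper's Case~2 the hypothesis $r_{q+1}+1 \le m$ \emph{immediately} gives $1/(r_{q+1}+1) > \mu-1$ by the definition of $m$ (\cref{thm:m}), so your coefficient $c^*$ is automatically positive and no subcase is needed. What you call ``the hard part'' (the $c^* < 0$ branch, which you correctly trace back to $r_{q+1} = m$ and $q = Q-1$) lands entirely inside the paper's Case~1, where it is dispatched in one line: $z^* = z^{(q)}$ and \cref{thm:greedy-is-sort-of-best} gives $\score(z^{(q)},k,\mu) \le \score(z^{(Q)},k,\mu)$. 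So your separate sign-to-structure argument and the final reduction to $(\mu-1)r_Q \le 1$ are valid but unnecessary; the paper's choice of pivot ($r_{q+1}$ versus $m$ rather than $q$ versus $Q$) sidesteps that branch altogether.
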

\begin{proof}
\textbf{Case 1}: $r_{q+1} \ge m$.\\
Then by \cref{thm:opt-zeros}, $z_i^* > 0$ iff $i \in R_q$.
Therefore, $z^* = z^{(q)}$, so by \cref{thm:greedy-perf,thm:greedy-is-sort-of-best},
$\score(z^*, k, \mu) = S_{q+1} + (\mu-1)/r_{q+1}
\le S_{Q+1} + (\mu-1)/r_{Q+1}$.

\textbf{Case 2}: $r_{q+1} \le m-1$.\\
Then $q \le Q-1$ and by \cref{thm:opt-zeros},
$z_i^* = 1$ for $i \in R_q$ and $z_i^* = 0$ for $i \in [r_{q+1}] - R_q$.
\begin{align}
& \score(z^*, k, \mu) = \mu + \sum_{i=1}^{k-1} z_i^*\left(\frac{1}{i} - \frac{\mu}{i+1}\right)
\nonumber
\\ &\quad= \mu + \sum_{j=1}^q \left(\frac{1}{r_j} - \frac{\mu}{r_j+1}\right)
    + \sum_{i=r_{(q+1)}+1}^m z_i^*\left(\frac{1}{i} - \frac{\mu}{i+1}\right)
\nonumber
\\ &\quad= \mu + S_q - \mu\left(1 - \frac{1}{r_{q+1}}\right)
    + \sum_{i=r_{(q+1)}+1}^m \frac{z_i^*}{i+1}\left(\frac{1}{i} - (\mu-1)\right)
\tag{by \cref{thm:hn-series}}
\\ \label{eqn:score-ub} &\quad\le \left(S_{q+1} + \frac{\mu-1}{r_{q+1}}\right)
    + \left(\frac{1}{r_{q+1}+1} - (\mu-1)\right)\sum_{i=r_{(q+1)}+1}^m \frac{z_i^*}{i+1}
\end{align}
\begin{align*}
1 &> \cost(z^*, k) = \sum_{j=1}^q \frac{1}{r_j + 1} + \sum_{i=r_{(q+1)}+1}^m \frac{z_i^*}{i+1}
\\ &= \left(1 - \frac{1}{r_{q+1}}\right) + \sum_{i=r_{(q+1)}+1}^m \frac{z_i^*}{i+1}
\tag{by \cref{thm:hn-series}}
\end{align*}
\begin{equation}
\label{eqn:resid-ub}
\implies \sum_{i=r_{(q+1)}+1}^m \frac{z_i^*}{i+1} < \frac{1}{r_{q+1}}
\end{equation}
\begin{align}
& r_{q+1} + 1 \le m = \ceil{\frac{1}{\max(\mu-1, 1/k)}} - 1 < \frac{1}{\max(\mu-1, 1/k)}
\tag{by \cref{thm:m}}
\\ \label{eqn:coeff-positive} &\implies \frac{1}{r_{(q+1)}+1} > \max(\mu-1, 1/k) \ge \mu-1
\end{align}
By \cref{eqn:score-ub,eqn:resid-ub,eqn:coeff-positive}, we get
\[ \score(z^*, k, \mu) < \left(S_{q+1} + \frac{\mu-1}{r_{q+1}}\right)
+ \left(\frac{1}{r_{q+1}+1} - (\mu-1)\right)\frac{1}{r_{q+1}}
= S_{q+2} \le S_{Q+1} \]

On combining case 1 and case 2, we get
$\score(z^*, k, \mu) \le S_{Q+1} + \max(0, \mu-1)/r_{Q+1}$.
\end{proof}

\Cref{thm:greedy-perf} with $q = Q$ and \cref{thm:ub} together give us \cref{thm:exact}.

\section{Limiting Behavior}

\begin{lemma}
Let $\mu_1 < \mu_2$. Then $\opt(\IP(k, \mu_1)) < \opt(\IP(k, \mu_2))$.
\end{lemma}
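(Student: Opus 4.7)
The plan is to exploit the fact that, for every fixed feasible $z$, the score $\score(z, k, \mu)$ is an affine function of $\mu$ with strictly positive slope, while the feasible region of $\IP(k, \mu)$ does not depend on $\mu$ at all. Concretely, I would first rewrite
\[ \score(z, k, \mu) = \mu + \sum_{j=1}^{k-1} z_j\left(\frac{1}{j} - \frac{\mu}{j+1}\right) = \mu\bigl(1 - \cost(z, k)\bigr) + \sum_{j=1}^{k-1} \frac{z_j}{j}, \]
and observe that the constraint $\cost(z, k) < 1$ is independent of $\mu$. In particular, the feasible regions of $\IP(k, \mu_1)$ and $\IP(k, \mu_2)$ coincide, and for every feasible $z$ the coefficient $1 - \cost(z, k)$ is strictly positive, so $\mu \mapsto \score(z, k, \mu)$ is strictly increasing.

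Next, by \cref{thm:ip-time} the feasible region is finite, so an optimal solution $z^{(1)}$ of $\IP(k, \mu_1)$ is attained. Using that this same $z^{(1)}$ is feasible for $\IP(k, \mu_2)$ and the strict monotonicity above, I would chain
\[ \opt(\IP(k, \mu_1)) = \score(z^{(1)}, k, \mu_1) < \score(z^{(1)}, k, \mu_2) \le \opt(\IP(k, \mu_2)), \]
where the strict inequality uses $\mu_1 < \mu_2$ together with $1 - \cost(z^{(1)}, k) > 0$, and the last inequality is just the definition of $\opt(\IP(k, \mu_2))$.

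Since the argument collapses to a one-line comparison of two affine functions of $\mu$ at a common point, I do not anticipate any real obstacle. The only thing worth verifying explicitly is that the constraint in $\IP(k, \mu)$ is indeed strict, so that the slope $1 - \cost(z^{(1)}, k)$ is positive for \emph{every} feasible $z^{(1)}$ (including the all-zero vector, in which case the slope is $1$ and the conclusion $\opt(\IP(k, \mu_1)) = \mu_1 < \mu_2 \le \opt(\IP(k, \mu_2))$ is immediate).
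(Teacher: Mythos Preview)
Your proof is correct and is essentially the paper's own argument: both rewrite $\score(z,k,\mu)=\mu(1-\cost(z,k))+\sum_j z_j/j$, take an optimal $z^*$ for $\IP(k,\mu_1)$, and use $1-\cost(z^*,k)>0$ together with $\mu_1<\mu_2$ to get the strict inequality, then bound by $\opt(\IP(k,\mu_2))$. Your explicit appeal to \cref{thm:ip-time} for attainment of the optimum is a small clarifying addition, but otherwise the route is identical.
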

\begin{proof}
Let $z^*$ be an optimal solution to $\IP(k, \mu_1)$. Then
\begin{align*}
& \opt(\IP(k, \mu_1)) = \score(z^*, k, \mu_1)
= \mu_1 + \sum_{i=1}^{k-1} z^*_i\left(\frac{1}{i} - \frac{\mu_1}{i+1}\right)
\\ &= \sum_{i=1}^{k-1} \frac{z^*_i}{i} - \mu_1(1 - \cost(z^*, k))
< \sum_{i=1}^{k-1} \frac{z^*_i}{i} - \mu_2(1 - \cost(z^*, k))
\\ &= \score(z^*, k, \mu_2) \le \opt(\IP(k, \mu_2))
\qedhere \end{align*}
\end{proof}

\begin{lemma}
Let $k_1 < k_2$. Then $\opt(\IP(k_1, \mu)) \le \opt(\IP(k_2, \mu))$.
\end{lemma}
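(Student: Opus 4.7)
The plan is a direct embedding: every feasible solution of $\IP(k_1, \mu)$ lifts to a feasible solution of $\IP(k_2, \mu)$ with the same objective value by appending zero coordinates. Formally, given $z \in \mathbb{Z}_{\ge 0}^{k_1-1}$ feasible for $\IP(k_1, \mu)$, I would define $\tilde z \in \mathbb{Z}_{\ge 0}^{k_2-1}$ by $\tilde z_j \defeq z_j$ for $j \in [k_1-1]$ and $\tilde z_j \defeq 0$ for $k_1 \le j \le k_2-1$. The appended coordinates contribute nothing to either the left-hand side of the constraint or to the summation in the objective, so $\cost(\tilde z, k_2) = \cost(z, k_1) < 1$ (hence $\tilde z$ is feasible for $\IP(k_2, \mu)$) and $\score(\tilde z, k_2, \mu) = \score(z, k_1, \mu)$.

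Applying this construction to an optimal $z^*$ of $\IP(k_1, \mu)$ yields a feasible $\tilde z^*$ for $\IP(k_2, \mu)$ with $\score(\tilde z^*, k_2, \mu) = \opt(\IP(k_1, \mu))$, which immediately gives $\opt(\IP(k_2, \mu)) \ge \opt(\IP(k_1, \mu))$. There is no real obstacle here; the only thing worth checking is that $\mu$ is a valid parameter for both programs, which holds since the constraint $\mu \in [0, k]$ used when defining the harmonic function satisfies $\mu \in [0, k_1] \subseteq [0, k_2]$. The degenerate case $k_1 = 1$ is handled automatically, since $\IP(1, \mu)$ has an empty index set (its unique feasible solution is the empty vector with score $\mu$), and the all-zeros vector in $\IP(k_2, \mu)$ likewise scores $\mu$, so the desired inequality still holds.
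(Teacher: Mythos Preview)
Your argument is correct and is essentially the same as the paper's: pad an optimal solution of $\IP(k_1,\mu)$ with zeros to obtain a feasible solution of $\IP(k_2,\mu)$ with the same cost and score, giving $\opt(\IP(k_2,\mu)) \ge \opt(\IP(k_1,\mu))$. Your extra remarks about the range of $\mu$ and the $k_1=1$ case are harmless additions.
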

\begin{proof}
Let $z^*$ be an optimal solution to $\IP(k_1, \mu)$.
Let $\zhat$ be the vector obtained by adding $(k_2 - k_1)$ zeros to $z^*$, i.e.
\[ \zhat_i = \begin{cases} z^*_i & i < k_1 \\ 0 & k_1 \le i < k_2 \end{cases} \]
Then $\cost(\zhat, k_2) = \cost(z^*, k_1) < 1$, so $\zhat$ is feasible for $\IP(k_2, \mu)$ and
\[ \opt(\IP(k_2, \mu)) \ge \score(\zhat, k_2, \mu)
= \score(z^*, k_1, \mu) = \opt(\IP(k_1, \mu))  \qedhere \]
\end{proof}

\begin{theorem}
\label{thm:tk-dec}
Let $k \ge 2$ and $\mu_k$ and $\mu_{k+1}$ be constants such that
\[ \frac{k+1}{k} \le \mu_{k+1} \le \mu_k
\textrm{ and } \frac{k}{k-1} \le \mu_k < 2
\textrm{ and } \frac{1}{\mu_{k+1}-1} - \frac{1}{\mu_k-1} \le 1. \]
Then $\opt(\IP(k, \mu_k)) \ge \opt(\IP(k+1, \mu_{k+1}))$.
\end{theorem}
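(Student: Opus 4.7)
The plan is to invoke \cref{thm:exact} to obtain closed-form expressions for $\opt(\IP(k,\mu_k))$ and $\opt(\IP(k+1,\mu_{k+1}))$, and then compare them by tracking how the index $Q$ shifts between the two parameter pairs. Both pairs satisfy the hypotheses of \cref{thm:exact}: $k \ge 2$ is given, and the bounds $k/(k-1) \le \mu_k < 2$ and $(k+1)/k \le \mu_{k+1} \le \mu_k$ force $\mu_k, \mu_{k+1} \in [1,2)$. Since $\mu_k - 1 \ge 1/(k-1) > 1/k$ and $\mu_{k+1} - 1 \ge 1/k > 1/(k+1)$, \cref{thm:m} simplifies to $m_k = \ceil{1/(\mu_k-1)} - 1$ and $m_{k+1} = \ceil{1/(\mu_{k+1}-1)} - 1$, with corresponding indices $Q_k$ and $Q_{k+1}$.

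The structural step is to establish $Q_k \le Q_{k+1} \le Q_k + 1$. The inequality $\mu_{k+1} \le \mu_k$ gives $m_{k+1} \ge m_k$ and hence $Q_{k+1} \ge Q_k$. The third hypothesis $1/(\mu_{k+1}-1) - 1/(\mu_k-1) \le 1$ yields $1/(\mu_{k+1}-1) \le \ceil{1/(\mu_k-1)} + 1$, and taking ceilings (the right-hand side is an integer) produces $m_{k+1} \le m_k + 1$. Combining this with the super-linear recurrence $r_{Q_k+2} = r_{Q_k+1}(r_{Q_k+1}+1) \ge (m_k+1)(m_k+2)$ (using $r_{Q_k+1} > m_k$) gives $r_{Q_k+2} > m_k + 1 \ge m_{k+1}$, so $Q_{k+1} \le Q_k + 1$.

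A two-case comparison using $\opt(\IP(k,\mu)) = S_{Q+1} + (\mu-1)/r_{Q+1}$ then finishes the proof. If $Q_{k+1} = Q_k =: Q$, the difference $\opt(\IP(k+1,\mu_{k+1})) - \opt(\IP(k,\mu_k))$ collapses to $(\mu_{k+1}-\mu_k)/r_{Q+1} \le 0$. If $Q_{k+1} = Q_k + 1$, setting $Q := Q_k$ and using $r_{Q+2} = r_{Q+1}(r_{Q+1}+1)$, the difference simplifies to $(\mu_{k+1} - (\mu_k-1)(r_{Q+1}+1))/r_{Q+2}$, reducing the task to showing $\mu_{k+1} \le (\mu_k-1)(r_{Q+1}+1)$. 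Since $r_{Q+1} > m_k$ forces the integrality bound $r_{Q+1} \ge \ceil{1/(\mu_k-1)} \ge 1/(\mu_k-1)$, we obtain $(\mu_k-1)(r_{Q+1}+1) \ge 1 + (\mu_k-1) = \mu_k \ge \mu_{k+1}$. The main obstacle is this last case: one has to verify that the extra term $1/r_{Q+2}$ contributed by the incremented $Q$ is fully absorbed by the drop from $(\mu_{k+1}-1)/r_{Q+2}$ to $(\mu_k-1)/r_{Q+1}$, which is precisely where the integrality step $r_{Q+1} \ge \ceil{1/(\mu_k-1)}$ must be tight enough to close the argument.
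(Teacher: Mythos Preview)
Your proposal is correct and follows essentially the same strategy as the paper: apply \cref{thm:exact}, reduce $m$ via \cref{thm:m} using the lower bounds on $\mu_k,\mu_{k+1}$, sandwich $Q_{k+1}$ between $Q_k$ and $Q_k+1$, and then split into two cases. The only noteworthy difference is in Case~2: the paper first pins down $r_{Q_{k+1}} = m_{k+1}$ and expresses $r_{Q_{k+1}}+1 = \ceil{1/(\mu_{k+1}-1)}$, whereas you work directly from $r_{Q_k+1} > m_k$, i.e.\ $r_{Q_k+1} \ge \ceil{1/(\mu_k-1)} \ge 1/(\mu_k-1)$, to get $(\mu_k-1)(r_{Q_k+1}+1) \ge \mu_k \ge \mu_{k+1}$. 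Your route is a bit cleaner here and sidesteps a delicate inequality step in the paper's write-up, but the two arguments are the same in spirit.
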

\begin{proof} Let
\begin{align*}
T_k &\defeq \opt(\IP(k, \mu_k)) & T_{k+1} &\defeq \opt(\IP(k+1, \mu_{k+1}))
\\ m_k &\defeq \ceil{\frac{1}{\max(\mu_k-1, 1/k)}}-1
& m_{k+1} &\defeq \ceil{\frac{1}{\max(\mu_{k+1}-1, 1/(k+1))}}-1
\\ Q_k &\defeq \argmax_{j \ge 1} (r_j \le m_k) & Q_{k+1} &\defeq \argmax_{j \ge 1} (r_j \le m_{k+1})
\end{align*}
Then
\begin{align*}
& \mu_{k+1} \le \mu_k \textrm{ and } 1/(k+1) \le 1/k
\\ &\implies \max(\mu_{k+1}-1, 1/(k+1)) \le \max(\mu_k-1, 1/k)
\\ &\implies m_{k+1} \ge m_k
\\ &\implies Q_{k+1} \ge Q_k
\end{align*}
\begin{align*}
& \frac{k}{k-1} \le \mu_k
\\ &\implies \frac{1}{k-1} \le \mu_k - 1 = \max(\mu_k - 1, 1/k)
\\ &\implies k-1 \ge \frac{1}{\mu_k-1} = \frac{1}{\max(\mu_k - 1, 1/k)}
\\ &\implies k-2 \ge \ceil{\frac{1}{\mu_k-1}}-1 = m_k
\end{align*}
Similarly, we get $k-1 \ge \ceil{1/(\mu_{(k+1)}-1)}-1 = m_{k+1}$.

We are given that $1/(\mu_{k+1}-1) \le 1 + 1/(\mu_k-1)$. Hence,
\[ m_{k+1} = \ceil{\frac{1}{\mu_{k+1}-1}}-1
\le \ceil{1 + \frac{1}{\mu_k-1}} - 1
= \ceil{\frac{1}{\mu_k-1}} = m_k+1 \]
Therefore, $m_{k+1} \in \{m_k, m_k+1\}$.

\textbf{Case 1}: $Q_{k+1} = Q_k$. Then by \cref{thm:exact},
\[ T_k = S_{Q_k} + \frac{\mu_k}{r_{(Q_k)+1}}
\ge S_{Q_{(k+1)}} + \frac{\mu_{k+1}}{r_{Q_{(k+1)}+1}}
= T_{k+1} \]

\textbf{Case 2}: $Q_{k+1} \ge Q_k+1$. Then $m_{k+1} = m_k+1$.
By the definition of $Q_k$ and $Q_{k+1}$, we get
\begin{align*}
& r_{Q_k} \le m_k < r_{Q_k+1} \textrm{ and } r_{Q_{(k+1)}} \le m_{k+1}
\\ &\implies m_{k+1} = m_k + 1 \le r_{Q_k+1} \le r_{Q_{(k+1)}} \le m_{k+1}
\\ &\implies Q_{k+1} = Q_k + 1 \textrm{ and } r_{Q_{(k+1)}} = m_{k+1}
\end{align*}
Let $Q \defeq Q_{k+1} = Q_k + 1$.
For any $t \in \mathbb{Z}_{\ge 0}$, let $S_t \defeq \sum_{j=1}^t 1/r_j$.
Then by \cref{thm:exact},
\begin{align*}
& T_k - T_{k+1} = \left(S_Q + \frac{\mu_k-1}{r_Q}\right)
    - \left(S_Q + \frac{\mu_{k+1}}{r_{Q+1}}\right)
\\ &\implies r_Q(T_k - T_{k+1}) = (\mu_k-1) - \frac{\mu_{k+1}}{r_Q + 1}
\\ &\quad = (\mu_k-1) - \frac{\mu_{k+1}}{m_{k+1}+1}
\\ &\quad = (\mu_k-1) - \frac{\mu_{k+1}}{\ceil{1/(\mu_{(k+1)}-1)}}
\\ &\quad > (\mu_k-1) - \frac{\mu_{k+1}}{1 + 1/(\mu_{(k+1)}-1)}
\\ &\quad = (\mu_k-1) - (\mu_{k+1}-1) \ge 0
\end{align*}
Therefore, for both case 1 and case 2, $T_k \ge T_{k+1}$.
\end{proof}

\begin{corollary}
\label{thm:tk-dec-lee}
Let $\mu_k \defeq k/(k-1)$.
Then $\opt(\IP(k, \mu_k))$ is a non-increasing function of $k$.
\end{corollary}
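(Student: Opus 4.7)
The plan is to apply \cref{thm:tk-dec} with the specific choice $\mu_k \defeq k/(k-1)$, which covers the indices $k \ge 3$, and then handle the base case $k = 2$ separately (since $\mu_2 = 2$ violates the hypothesis $\mu_k < 2$ of \cref{thm:tk-dec}).

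For $k \ge 3$, I would verify each of the three hypotheses of \cref{thm:tk-dec} in turn. The first compound hypothesis $\tfrac{k+1}{k} \le \mu_{k+1} \le \mu_k$ splits into the equality $\mu_{k+1} = \tfrac{k+1}{k}$ and the inequality $\tfrac{k+1}{k} \le \tfrac{k}{k-1}$, which after cross-multiplying reduces to $k^2 - 1 \le k^2$. The second hypothesis $\tfrac{k}{k-1} \le \mu_k < 2$ holds because $\mu_k = \tfrac{k}{k-1}$ (equality) and $\tfrac{k}{k-1} < 2 \iff k > 2$. The third hypothesis is a direct computation: since $\mu_{k+1} - 1 = 1/k$ and $\mu_k - 1 = 1/(k-1)$, we get $\tfrac{1}{\mu_{k+1}-1} - \tfrac{1}{\mu_k-1} = k - (k-1) = 1$, which meets the bound. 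Hence \cref{thm:tk-dec} gives $\opt(\IP(k, \mu_k)) \ge \opt(\IP(k+1, \mu_{k+1}))$ for every $k \ge 3$.

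For the leftover comparison $k = 2$ vs.\ $k = 3$, I would use \cref{thm:simple-bound} together with \cref{thm:ip-eq-lp}. Since $\mu_2 = 2$, \cref{thm:simple-bound} yields $\opt(\LP(2, 2)) = 2$ and hence $\opt(\IP(2, 2)) = 2$. On the other hand, $\mu_3 = 3/2 < 2$, so \cref{thm:simple-bound} gives $\opt(\IP(3, 3/2)) = \opt(\LP(3, 3/2)) \le \max(3/2, 2) = 2$. Combining these yields $\opt(\IP(2, \mu_2)) \ge \opt(\IP(3, \mu_3))$, which together with the $k \ge 3$ case establishes that $\opt(\IP(k, \mu_k))$ is non-increasing in $k$.

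The proof is essentially bookkeeping: the only mild obstacle is recognizing that \cref{thm:tk-dec} does not cover $k = 2$ because $\mu_2 = 2$ fails the strict inequality $\mu_k < 2$, so the base case must be dispatched by the crude simple-bound argument rather than by the same machinery used for larger $k$.
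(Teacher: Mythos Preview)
Your proposal is correct and follows exactly the route the paper intends: the corollary is stated immediately after \cref{thm:tk-dec} with no separate proof, so the implicit argument is precisely the hypothesis-check you carry out. You are in fact more careful than the paper, since the standing assumption $\mu<2$ (and hence \cref{thm:tk-dec}) only covers $k\ge 3$, and you correctly patch the $k=2$ versus $k=3$ comparison using \cref{thm:simple-bound} and \cref{thm:ip-eq-lp}.
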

\begin{corollary}
Let $\mu_k \defeq k/(k-2)$.
Then $\opt(\IP(k, \mu_k))$ is a non-increasing function of $k$.
\end{corollary}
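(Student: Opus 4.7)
The plan is to invoke \cref{thm:tk-dec} with $\mu_k = k/(k-2)$ and $\mu_{k+1} = (k+1)/(k-1)$, which applies for $k \ge 5$, and to handle the small cases $k \in \{3, 4\}$ separately using \cref{thm:simple-bound} together with \cref{thm:ip-eq-lp}. The function is only defined for $k \ge 3$, since $\mu_k$ is undefined at $k = 2$.

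For the small cases, observe that $\mu_3 = 3$ and $\mu_4 = 2$ both satisfy $\mu_k \ge 2$, so \cref{thm:simple-bound} (combined with \cref{thm:ip-eq-lp}) yields $\opt(\IP(3, 3)) = 3$ and $\opt(\IP(4, 2)) = 2$. Since $\mu_5 = 5/3 < 2$, \cref{thm:simple-bound} also gives $\opt(\IP(5, 5/3)) \le \max(5/3, 2) = 2$. Hence $\opt(\IP(3, \mu_3)) \ge \opt(\IP(4, \mu_4)) \ge \opt(\IP(5, \mu_5))$.

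For the main case $k \ge 5$, I would verify the three hypotheses of \cref{thm:tk-dec}. The condition $k/(k-1) \le \mu_k < 2$ reduces to $k-2 \le k-1$ (trivial) and $k > 4$ (which holds). The condition $(k+1)/k \le \mu_{k+1} \le \mu_k$ follows from $k-1 \le k$ in the denominator of $\mu_{k+1}$, and from cross-multiplying $(k+1)(k-2) \le k(k-1)$, i.e., $-2 \le 0$. Finally, $\mu_k - 1 = 2/(k-2)$ and $\mu_{k+1} - 1 = 2/(k-1)$, so the third hypothesis becomes $(k-1)/2 - (k-2)/2 = 1/2 \le 1$. Then \cref{thm:tk-dec} directly gives $\opt(\IP(k, \mu_k)) \ge \opt(\IP(k+1, \mu_{k+1}))$.

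The only subtle point is the boundary case $k = 4$, where $\mu_4 = 2$ lies exactly at the upper edge of the condition $\mu_k < 2$ of \cref{thm:tk-dec}, so that theorem does not apply to the step from $T_4$ to $T_5$. This is cleanly resolved by the simple-bound computation above, since $\opt(\IP(4, 2))$ attains the value $2$ exactly while $\opt(\IP(5, 5/3))$ is at most $2$. Everything else is straightforward verification.
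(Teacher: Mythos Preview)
Your proof is correct and follows the paper's intended approach: the corollary is meant to be an immediate consequence of \cref{thm:tk-dec}, and you verify its hypotheses for $\mu_k = k/(k-2)$ exactly as expected. You are in fact more careful than the paper, which leaves the boundary cases $k\in\{3,4\}$ (where $\mu_k\ge 2$ and \cref{thm:tk-dec} does not apply) implicit, relying on the values in Table~\ref{table:ip-values}; your use of \cref{thm:simple-bound} and \cref{thm:ip-eq-lp} to dispatch those cases is the right way to make the argument complete.
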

\begin{corollary}
Let $\mu_k \defeq k(k-2)/(k^2-3k+1)$.
Then $\opt(\IP(k, \mu_k))$ is a non-increasing function of $k$.
\end{corollary}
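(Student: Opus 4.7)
The plan is to reduce the corollary to \cref{thm:tk-dec} applied to consecutive $k$, with two wrinkles to address: the theorem requires $\mu_k < 2$, which forces special treatment of $k = 3$; and the third hypothesis of the theorem literally fails for our choice of $\mu_k$, calling for a local fix.

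I would first dispose of the transition $k = 3 \to k = 4$. Direct evaluation gives $\mu_3 = 3$ and $\mu_4 = 8/5$. Since $\mu_3 \ge 2$, \cref{thm:simple-bound} gives $\opt(\IP(3, \mu_3)) = 3$, while for $k \ge 4$ (where I will establish $\mu_k < 2$) the same lemma gives $\opt(\IP(k, \mu_k)) \le \max(\mu_k, 2) = 2 < 3$, settling the $k = 3$ step.

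For $k \ge 4$, the plan is to invoke \cref{thm:tk-dec}. The key algebraic identity is $\mu_k - 1 = (k-1)/(k^2 - 3k + 1)$, equivalently $1/(\mu_k - 1) = (k-2) - 1/(k-1)$. From this I would routinely verify the remaining hypotheses: $\mu_k < 2$ iff $k^2 - 4k + 2 > 0$, which holds for $k \ge 4$; $\mu_k \ge k/(k-1)$ iff $1/(\mu_k - 1) \le k - 1$, which is immediate from the identity; analogously $\mu_{k+1} \ge (k+1)/k$; and $\mu_{k+1} \le \mu_k$, which, after substituting $a = k-1$ and cross-multiplying, reduces to the trivially true inequality $a^2 + a + 1 \ge 0$.

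The subtle point is the third hypothesis. A short calculation gives $\frac{1}{\mu_{k+1}-1} - \frac{1}{\mu_k-1} = (k-1-\tfrac{1}{k}) - (k-2-\tfrac{1}{k-1}) = 1 + \frac{1}{k(k-1)} > 1$, so the third hypothesis literally fails. However, inspecting the proof of \cref{thm:tk-dec}, this hypothesis is used only to derive $m_{k+1} \le m_k + 1$; and for our specific $\mu_k$ the conclusion still holds by direct computation, because $1/(\mu_k - 1) \in (k-3, k-2)$ gives $m_k = k-3$, while $1/(\mu_{k+1} - 1) \in (k-2, k-1)$ gives $m_{k+1} = k-2$, so $m_{k+1} = m_k + 1$ as required. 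The remainder of the proof of \cref{thm:tk-dec} then applies verbatim. The main obstacle is precisely this diagnosis: one has to look inside the proof of \cref{thm:tk-dec} to see that its failed hypothesis is just a convenient sufficient condition, and to recognise that its actual consequence must be reverified by hand for this particular $\mu_k$.
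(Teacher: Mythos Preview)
Your approach is exactly the paper's intended one --- deduce the corollary from \cref{thm:tk-dec} --- but you are more careful than the paper, which states the corollary without proof and leaves the reader to verify the hypotheses. You correctly observe two things the paper glosses over: first, that $\mu_3 = 3 \ge 2$ forces a separate treatment of the step $k=3 \to k=4$ via \cref{thm:simple-bound}; second, and more interestingly, that the third hypothesis of \cref{thm:tk-dec} \emph{fails} for this $\mu_k$, since $1/(\mu_{k+1}-1) - 1/(\mu_k-1) = 1 + 1/(k(k-1)) > 1$. Your fix --- noting that this hypothesis is used only to conclude $m_{k+1} \le m_k + 1$, and verifying that conclusion directly from $1/(\mu_k-1) = k-2-1/(k-1)$, which gives $m_k = k-3$ and $m_{k+1} = k-2$ --- is correct and is precisely what the paper should have said. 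So your proposal is sound and in fact patches a small gap in the paper's exposition.
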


Let $S_t \defeq \sum_{j=1}^t 1/r_j$ and $S_{\infty} \defeq \lim_{t \to \infty} S_t$.
Let $\mu_k \defeq k/(k-1)$. Let $T_k \defeq \opt(\IP(k, \mu_k))$.
Let $T_{\infty} \defeq \lim_{k \to \infty} T_k$.
By \cref{thm:exact}, we get that $T_{\infty} = S_{\infty}$.

$S_t$ is an increasing function of $t$.
By \cref{thm:tk-dec-lee}, we get that $T_k$ is a non-increasing function of $k$.
Therefore, $S_t \le S_{\infty} = T_{\infty} \le T_k$.
This will allow us to compute lower and upper bounds on $S_{\infty}$.

$S_{10} \approx 1.691030206757254$.
Let $k = r_9+2$. Then $m = \ceil{1/\max(\mu_k-1, k)}-1 = k-2 = r_9$ and $Q = 9$.
By \cref{thm:exact},
\[ T_k = S_{10} + \frac{\mu_k-1}{r_{10}}
= S_{10} + \frac{1}{r_{10}(r_9+1)}
\approx 1.691030206757254 \]
Therefore, $S_{\infty} \approx 1.691030206757254$.

\end{document}